\newcommand{\phm}{\phantom{-}}
\newcommand{\barz}{\overline{z}}
\newcommand{\barzx}{\overline{zx}}
\newcommand{\barzxx}{\overline{zx^2}}
\newcommand{\wt}{\widetilde}
\newcommand{\tp}{\tilde p}
\newcommand\real{\mathbb{R}}
\newcommand\tran{\mathsf{T}}
\newcommand{\e}{\mathbb{E}}
\newcommand{\var}{\mathrm{Var}}
\newcommand{\cov}{\mathrm{Cov}}
\newcommand{\err}{\varepsilon}
\renewcommand{\ge}{\geqslant}
\renewcommand{\le}{\leqslant}
\newcommand{\cx}{\mathcal{X}}
\newcommand{\rd}{\,\mathrm{d}}
\newcommand{\dnorm}{\mathcal{N}}
\newcommand{\dunif}{\mathcal{U}}
\newtheorem{proposition}{Proposition}
\newcommand{\giv}{\!\mid\!}
\newcommand{\pre}{\mathrm{pre}}
\author{Art B. Owen\footnote{Art Owen is a professor at Stanford
University.  Most of the work reported here was done for Google and
was not part of his Stanford responsibilities.}
\\Stanford University \and Hal Varian\footnote{Hal Varian is Chief Economist at Google.} \\Google Inc.}
\title{Optimizing the tie-breaker regression discontinuity design}
\date{July 2020}
\begin{document}
\maketitle

\begin{abstract}
  Motivated by customer loyalty plans and scholarship programs, we
  study tie-breaker designs which are hybrids of randomized controlled
  trials (RCTs) and regression discontinuity designs (RDDs).  We
  quantify the statistical efficiency of a tie-breaker design in which
  a proportion $\Delta$ of observed subjects are in the RCT. In a two
  line regression, statistical efficiency increases monotonically with
  $\Delta$, so efficiency is maximized by an RCT.  We point to
  additional advantages of tie-breakers versus RDD: for a
  nonparametric regression the boundary bias is much less severe and
  for quadratic regression, the variance is greatly reduced.  For a
  two line model we can quantify the short term value of the treatment
  allocation and this comparison favors smaller $\Delta$ with the RDD
  being best.  We solve for the optimal tradeoff between these
  exploration and exploitation goals.  The usual tie-breaker design
  applies an RCT  on the middle $\Delta$ subjects as ranked by the assignment
  variable.  We quantify the efficiency of other designs such as
  experimenting only in the second decile from the top.  We also show
  that in some general parametric models a Monte Carlo evaluation can
  be replaced by matrix algebra.
\end{abstract}


\section{Introduction}

Airlines, hotels and other companies
may offer incentives such as free upgrades
to their most loyal customers.
An e-commerce company may offer some analytic tools
or other support to the customers most likely to benefit from them.
A philanthropist may offer higher education scholarships
to high school students with excellent GPAs.
It is reasonable to expect some benefit from the subjects
who receive the treatment, be it increased sales to a customer
or better educational outcomes for a student.
It is then of interest to measure the causal effect of these special treatments.
A natural choice in this context is the regression discontinuity
design (RDD) but that has the disadvantage of only estimating
a causal impact right at the threshold point separating
treated from untreated study subjects.

In this paper we study a tie-breaker design that can estimate
the causal effect more broadly. That design injects some randomness
into the decision near the cutoff.   Our main contributions
are to analyze the efficiency gains of tie-breaker designs versus
RDD, and to study
the tradeoffs behind deciding how much randomness to introduce.
More randomness brings greater statistical efficiency, while at the same
time, it is expected to reduce the value of the incentives by not
applying them where they will be the most effective.

The RDD was originated by \citet{this:camp:1960}.
In an RDD, subjects are sorted according to a treatment assignment variable $x$
and those for which $x$ exceeds a threshold $t$ get the
treatment while others do not.
Sometimes the assignment  variable is called a running variable
or a forcing variable.
For background on RDD see
\citet{Angrist09,Angrist14},
\citet{Imbens08}, \citet{Bloom08}, \citet{Kaauw08}, and
\citet{Lee10}.

Historically, regression discontinuity designs were fit by a
regression including a polynomial in $x$ and a discontinuous predictor
$1\{x>t\}$ whose coefficient was taken to be the estimated causal
impact of the treatment at $x=t$.  This approach is problematic.  Low
order polynomial estimates are biased by lack of fit, and high order
ones are unstable \citep{gelman2017high}.  The more modern approach
fits linear or quadratic or other low order local polynomial
regression models to the left and right of the threshold using kernel
weights proportional to $K(|x-t|/b)$ for a bandwidth $b>0$ and a
kernel $K(\cdot)$ of bounded support.
In the kernel regression approach,
the estimated causal effect is the difference between those nonparametric
regressions when extrapolated to $x=t$ using data sets on the left and
the right of $t$.  See~\cite{hahn2001identification} for a description,
\cite{porter2003estimation} for optimality results, and
\cite{calonico2014robust} for improved confidence interval estimation.
\cite{armstrong2018optimal} and \cite{imbens2019optimized} optimize
for regression functions in general convex classes while also taking
special care with assignment variables that have a discrete distribution.

One problem with RDDs is that a causal estimate is only
available at $x=t$.  A randomized controlled trial (RCT)
by contrast makes the treatment a random variable
independent of $x$.  An RCT  would not be appropriate for
a customer loyalty program, and even less so for a scholarship

This problem is well suited to a tie-breaker design.  For an assignment
variable $x$, subjects are assigned to a
control condition if $x\le A$, to a test condition if $x\ge B$ and
their treatment (test or control) is randomized if $A< x< B$.  If
$A=B$, then no subjects are randomized and the design is an RDD.  At
the other extreme, if all the $x$ values are between $A$ and $B$, then
the design is an RCT as described in texts on causal inference
\citep{imbens2015causal} or on experimental design
\citep{box1978statistics,wu2011experiments}.  Tie-breaker designs are
also called cutoff designs; see \citep{CappelleriXX}.  If one is fitting
kernel weighted regressions, then the tie-breaker design offers an
additional advantage.  Nonparametric regressions have their most severe bias
problems at or outside the boundary of the observed data
\citep{rice1983smoothing}.  When $A<B$ there is a whole interval of
$x$ values for which the nonparametric regressions need not
extrapolate.


\cite{Angrist2014leveling} use a tie-breaker design to evaluate the effects of
post secondary aid in Nebraska. In that setting, $x$ was a student ranking.
Students were triaged into top, middle and bottom groups.
The top students received aid, the bottom ones did not,
and those in the middle group were randomized to receive aid or not.
\cite{aike:west:schw:carr:hsiu:1998} report on a study
about allocation of students to remedial English classes where
the assignment variable is a measure of students' reading ability before they
matriculate.

Tie-breakers are not the only settings where the threshold
varies.  In fuzzy RDDs \citep{camp:1969} the threshold varies due to
dependence on other variables that may be unavailable to the
data analyst.
The threshold can also vary in settings where subjects or others working on their
behalf manipulate the value of $x$ in order to get the treatment
\citep{mccrary2008manipulation}. \cite{rosenman2019optimized}
propose a mitigation strategy.
We focus on the tie-breaker setting because in our motivating
problems the investigator has control of the treatment variable.

Our interest is in optimizing the size of the RCT within a tie-breaker
experiment.  For this purpose we need a statistical model.  In
Section~\ref{sec:general} we give a very general approach to this
problem but it provides no closed form results.  For interpretable
results, we work primarily with a model in which there are two linear
regressions, one for treatment and one for control.  This model was used
by \cite{gold:1972} and \cite{Bloom08}, who both find RCTs
more efficient than RDDs, and we think it is the simplest one in which
the tradeoff we study is interesting. We can interpolate between RCTs
and RDDs using a quantity $\Delta\in[0,1]$ representing the fraction
of experimental assignments, ranging from $\Delta=0$ for the RDD to
$\Delta=1$ for the RCT.  When the region of study is small then a
linear model will perform similarly to the local linear models
underlying kernel approaches to RDD.  At the design stage we know a
lot less about model goodness of fit than we will once the data are
available and that is another reason to design with a simple working
model.

Figure~\ref{fig:thick}  illustrates tie-breaker
designs for four values of $\Delta$. The assignment variable
there has a Gaussian distribution, that we assume has
been centered and scaled. The outcome variable is simulated
from a linear model with a constant treatment effect.
For instance, in the third panel, the top $1/6$
of subjects get the treatment, the bottom $1/6$ do not
and a fraction $\Delta = 2/3$ of the
data in the middle have randomized allocation.

For a Gaussian assignment variable, the experimental
region in the middle of the data is where the data are
most densely packed, which may well be where we
are most interested in learning the treatment effect.
The effect of treatment appears to be more visually
prominent at larger $\Delta$ in accordance
with the greater statistical efficiency that we
find here for larger~$\Delta$.

\begin{figure}[t]
\centering
\includegraphics[width=.9\hsize]{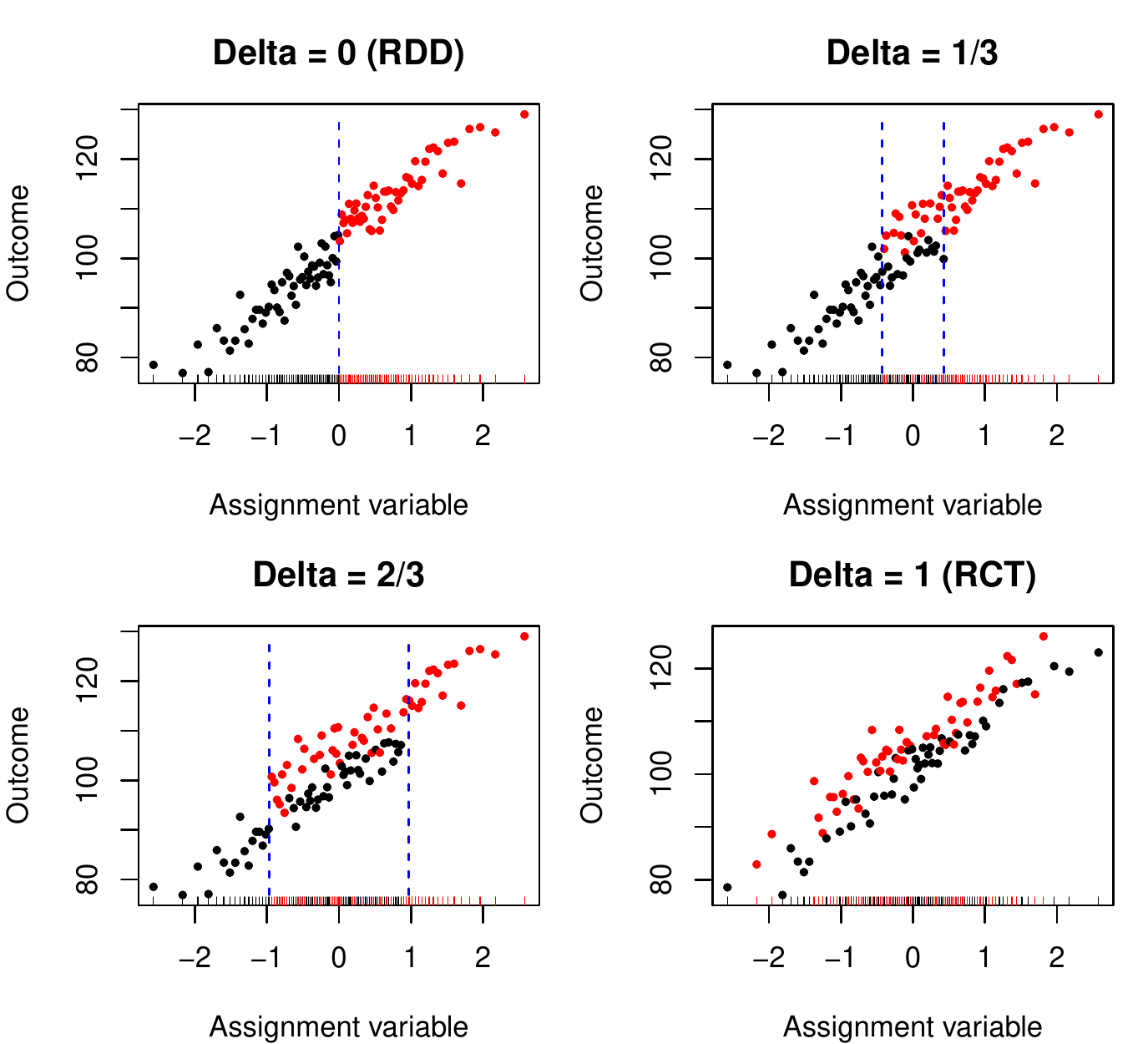}
\caption{Illustrative data for tie-breaker designs with $\Delta\in\{0,1/3,2/3,1\}$,
  and a standardized Gaussian assignment variable.
  The regression discontinuity design has $\Delta=0$,
  the randomized controlled trial has $\Delta=1$.
  Treated points are plotted in red, control in black.
  Allocation is deterministic for $x$ outside the blue lines.
}
\label{fig:thick}
\end{figure}

Our first working model is a two-line regression relating an outcome
to a uniformly distributed assignment variable. The early sections of our
paper work in this framework.  Section~\ref{sec:setup} introduces that
working model.  The slope and intercept vary between treatment and
control.  Section~\ref{sec:efftwo} shows that the statistical
efficiency of incorporating $\Delta>0$ experimentation versus the
plain regression discontinuity design at $\Delta=0$ is
${1+3\Delta^2(2-\Delta^2)}$, when $x\sim \dunif[-1,1]$.  Thus,
statistical efficiency is a monotone increasing function of the amount
of experimentation. At the extreme, a pure RCT with $\Delta=1$ is $4$
times as efficient as the RDD as was found earlier by
\cite{Bloom08}.  We ordinarily expect that our outcome
variable will show the greatest gains if we give the treatment to the
highest ranked subjects and a tie-breaker design will then reduce
those gains.  Section~\ref{sec:cost} quantifies that cost in the
two-line regression model and trades it off against statistical
efficiency. The optimal $\Delta$ is then dependent on the ratio
between the value per subject of the short term return and the value
of the information per subject that we get for a given $\Delta$.  It
can be hard to know how  much to weigh the information value
compared to the short term value.  A practical option is to choose the
smallest $\Delta$ with at least a specified amount of efficiency with
respect to the full experiment $\Delta=1$.

The later sections of our paper generalize beyond the working model.
Section~\ref{sec:quadratic} repeats our analysis of the linear model
for a pair of quadratic regression models.  We see that the regression
discontinuity design has a much higher variance than the experiment
does, in line with the instability findings of \cite{gelman2017high}
mentioned previously.  An RCT can have orders of magnitude less
variance than the RDD with tie-breaker designs in between.
Section~\ref{sec:gaussian} handles the case of a Gaussian assignment
variable that we illustrate in Figure~\ref{fig:thick}. It is similar
to the uniform case. Here a full RCT is $\pi/(\pi-2)\doteq 2.75$ times
as efficient as the RDD as was found by \citet{gold:1972}.

Section~\ref{sec:sliding} looks at replacing the three treatment
probabilities $0$\%, $50$\% and $100$\% by a strategy with more levels
or even a continuous sliding scale $p(x)$ of the assignment variable $x$.
We show that there is little to gain by this.  If $p(x)$ satisfies
$p(-x) = 1-p(x)$, as with  a symmetric CDF, then in a two line model
both the information gained and the value from the experimental subjects in any sliding scale
can also be attained by tie-breaker design using only levels $0\%$,
$50\%$ and $100$\%.  A non-symmetric sliding scale can be symmetrized
without affecting its cost and potentially reducing the variance of
some of the regression coefficients.  Using treatment probabilities
$\epsilon$, $0.5$ and $1-\epsilon$ would not improve efficiency but
would allow a potential outcomes analysis \citep{imbens2015causal} of
the data.

Section~\ref{sec:general} describes a numerical version of our
approach that does not require a simplistic regression model and allows
users to choose their own.  The design can then be chosen by an
intensive numerical search with a Monte Carlo evaluation of each
design choice.  We show how to replace that simulation-based inner
loop by matrix algebra allowing faster and more thorough optimization.
We also find in Section~\ref{sec:general} that experimenting on all data maximizes statistical
efficiency in very general circumstances.

The tie-breaker literature
has emphasized experiments in the middle range of the assignment variable
$x$.  Section~\ref{sec:noncentral} looks at off center experiments,
such as experimenting in just the second decile from the top.  In our
motivating applications, the treatment might only be offered to a
small fraction of subjects.  Experimenting in the second decile
reduces the most important regression coefficient's variance to about
60\% of what it would be with a comparably sized regression
discontinuity design.  When planning the experiment we don't know
whether the linear models that might work on the highest ranked
subjects would hold for all subjects.  At the time of analysis, we
might opt to reduce a bias by only including the highest ranked 30\%
of subjects in the analysis. This is similar to a kernel weighting.
Reducing the data set that way would greatly increase the variance of
both the RDD and tie-breaker designs. Interestingly, in this example,
the efficiency ratio between the two approaches is almost unchanged.
Section~\ref{sec:discussion} contains a short discussion of how to use
the findings.

We close this introduction with an historical note.  In the
Lanarkshire milk experiment, described by \cite{stud:1931} the goal
was to measure the effect of a daily ration of milk on the health of
school children. Among many complications was the fact that some of
the schools chose to give the rations to the students that they
thought needed it most.  While that may have been the most beneficial
way to allocate the school's milk, it was very damaging to the process
of learning the causal impact of the milk rations. A tie-breaker
experiment might have been a good compromise.


\section{Setup}\label{sec:setup}

We begin with a simple setting where there are an even number $N$ of
subjects $i=1,\dots,N$, and exactly $N/2$ of them will receive the
treatment.  There is an assignment variable $x_i\in\real$
for which it is reasonable to give the treatment to subjects
with the largest values.
The assignment variable might be the output of a
statistical machine learning model based on multiple variables,
or it could be based on a subjective judgment of one or more
experts or stakeholders.

We will simplify the problem by transforming $x_i$ to be equispaced in
the interval $[-1,1]$.  That is, after sorting the subjects into
increasing order of $x_i$, we make a rank transformation to
$x_i = (2i-N-1)/N$.
Let $z_i$ indicate the treatment status;
subjects that receive the treatment have $z_i=+1$ and subjects
that do not receive the treatment have $z_i=-1$.

We denote the experimental interval by $(-\Delta,+\Delta)$ for
$\Delta$ in $[0,1]$.  In our hybrid design the treatment assignment $z_i\in\{-1,1\}$
includes some randomization as follows:
\begin{align}\label{eq:hybridz}
\Pr(z_i=1\giv x_i) =
\begin{cases}
\,\ \ 1, & x_i \ge\Delta\\
1/2, & |x_i|<\Delta\\
\,\,-1, & x_i \le -\Delta.
\end{cases}
\end{align}

If $\Delta=0$, then we have a classic RDD with the
discontinuity at $x=0$.  If $\Delta=1$, then we have a classic RCT.
If $0<\Delta<1$, then we have a tie-breaker design
with $\Delta$ measuring the amount of randomization.

The random allocation in equation~\eqref{eq:hybridz} will, on average, make half of
the $z_i$ for $|x_i|<\Delta$ equal $1$ and the other half equal $-1$.
One way to do this is to choose $z_i=1$ for a simple random
sample of half of the elements in $R=\{i\mid |x_i|<\Delta\}$.
Stratified schemes, setting $z_i=1$ for exactly one random member of
each consecutive pair of indices in $R$ are also easy to implement.

The impact of the treatment is measured by a scalar outcome $Y$ where
$Y_i$ is a measure of the benefit derived from subject $i$.
That could be future sales in a commercial setting or a measure
of post-secondary educational success for a scholarship.
We suppose that the delay time between setting $z_i$ and observing
$Y_i$ is long enough to make bandit methods (see for instance,
\cite{scott2015multi}) unsuitable.  We will instead compare
experimental designs using the following two-line regression model:
\begin{align}\label{eq:twolines}
Y_i = \beta_0+\beta_1x_i+\beta_2z_i+\beta_3x_iz_i + \err_i,
\end{align}
where $\err_i$ are IID random variables with mean $0$ and
finite variance $\sigma^2>0$.
Our analysis is based on the regression model~\eqref{eq:twolines}
instead of the randomization because the treatment for subjects
with $x$ outside $(-\Delta,\Delta)$ is not random.
See Section~\ref{sec:sliding} for an alternative.

The effect of the treatment averaged over subjects $i=1,\dots,N$ is
$2\beta_2$. The factor of $2$ comes from comparing $z_i=1$ to
$z_i=-1$.  We can also estimate whether the effect increases or
decreases with $x$, through the coefficient $\beta_3$.  The
quantity $2\beta_2$ is also the magnitude of the treatment effect on
a (hypothetical) average subject with $x=0$.

Under model~\eqref{eq:twolines}, we can distinguish subjects for whom
the treatment is effective from those for whom it is not.
Suppose that $\tau$ is the incremental cost of offering the treatment to one
 subject.  This might be a support cost or foregone revenue;
in an educational context it would be the cost of offering a scholarship.
If $\beta_3>0$, then there is a cutpoint
$$x_*=\frac{\tau-2\beta_2}{2\beta_3}$$ with
$\e(Y\giv z=1)-\e(Y\giv z=-1)\ge \tau$ for subjects with $x\ge x_*$.
If $x_*>1$ then the treatment does not pay off for any subject
while if $x_*<1$ then it pays for all subjects.
If $\beta_3<0$, then
the treatment only pays off for subjects with $x_i \le x_*$.
We discuss that case further in Section~\ref{sec:cost}.

\section{Efficiency in the two-line model}\label{sec:efftwo}

We will analyze the data $(x_i,Y_i)$ for $i=1,\dots,N$ by fitting
model~\eqref{eq:twolines} by least squares.  The parameter of interest
is $\beta=(\beta_0,\beta_1,\beta_2,\beta_3)^\tran$ and we assume that
$Y_i$ are independent random variables with $\var(Y_i)=\sigma^2$.  The
design matrix is $\cx\in\real^{N\times 4}$ with $i$'th row
$(1, x_i, z_i, x_iz_i)$, and
$\var(\hat\beta)=(\cx^\tran\cx)^{-1}\sigma^2$.  Because $\sigma^2$
does not depend on $\Delta$, we can compare designs assuming that
$\sigma=1$.

Next, we look at how $\cx^\tran\cx$ depends on $\Delta$.
For large $N$ we can replace
$\sum_ix_i^2$ by $N\int_{-1}^1x^2\rd x/2=N/3$.
Similar integral approximations yield
\begin{align}\label{eq:xtx}
\frac1N\cx^\tran\cx \approx
\begin{pmatrix}
1 & 0 & 0 & \phi(\Delta)\\
0 & 1/3 & \phi(\Delta)& 0\\
0 & \phi(\Delta)  & 1 &0\\
\phi(\Delta) & 0 & 0 & 1/3
  \end{pmatrix},
\end{align}
where $\phi(\Delta)$ is the average value of $z\times x$ over
the design.  We let
$$
z(x) = \e(z\giv x) = \begin{cases}
  -1, & x \le -\Delta\\
  \phm0, & |x|<\Delta\\
  \phm1, & x \ge \Delta
\end{cases}
$$
and find that
\begin{align}\label{eq:defphi}
\phi(\Delta) = \frac12\int_{-1}^1
xz(x)\rd x
=\frac12\int_{-1}^{-\Delta}(-x)\rd x
+\frac12\int_{\Delta}^{1}x\rd x
=\frac{1-\Delta^2}2.
\end{align}

The approximation error in~\eqref{eq:xtx} is $O_p(1/\sqrt{N})$ when
the random $z_i$ are assigned by simple random sampling and it is much
smaller under stratified sampling. We will work with~\eqref{eq:xtx} as
if it were exact.

We can reorder the rows and columns of~\eqref{eq:xtx} to
make it block diagonal,
$$
\bordermatrix{& 1 & zx & z & x\cr
\,  1 & 1 & \phi & 0 & 0\cr
 zx & \phi & 1/3 & 0 & 0\cr
\, z & 0 & 0 & 1 & \phi\cr
\, x & 0 & 0 & \phi & 1/3\cr
}
$$
where the labels on the matrix above
refer to the variables that the $\beta_j$
multiply and $\phi=\phi(\Delta)$.
It follows that
\begin{align}\label{eq:varbetatwoline}
N\times\var\left(
\begin{pmatrix}
\hat\beta_0\\
\hat\beta_3\\
\hat\beta_2\\
\hat\beta_1
\end{pmatrix}
\right)
=\frac1{1/3-\phi^2}
\begin{pmatrix}
1/3 & -\phi & 0 & 0\\
-\phi &  1 & 0 & 0\\
0 & 0 & 1/3 & -\phi\\
0 & 0 & -\phi & 1\\
\end{pmatrix}.
\end{align}
The individual coefficients' variances are
$\var(\hat\beta_0)=\var(\hat\beta_2)=1/(1-3\phi^2)$
and
$\var(\hat\beta_1)=\var(\hat\beta_3)=3/(1-3\phi^2)$.
These variances are smallest for small values of $\phi$,
corresponding to large values of $\Delta$. That is, the
more randomized experimentation there is in the data, the less
variance there is in the estimates.
Therefore, the RDD
is least efficient and the RCT is most efficient.
Larger values of $\phi$ also induce stronger correlations
among the $\hat\beta_j$.

The estimated gain from the intervention
for a subject with a given $x$ is
$\hat\e(Y\giv x,z=1)-\hat\e(Y\giv x,z=-1)
=2(\hat\beta_2+x\hat\beta_3)$.
Next
\begin{align}\label{eq:varlingainatx}
\var(2(\hat\beta_2+x\hat\beta_3)) =
4\times\frac{1/3+x^2}{1/3-\phi^2}
=\frac{16(1+3x^2)}{1+3\Delta^2(2-\Delta^2)}
\end{align}
after some algebra.
The relative efficiency of the experiment versus regression
discontinuity is
\begin{align}\label{eq:eff}
\frac{\var\bigl(2(\hat\beta_2+x\hat\beta_3);\Delta=0\bigr)}
{\var\bigl(2(\hat\beta_2+x\hat\beta_3);\Delta=1\bigr)}
=\frac{1+3(2-1)}{1+3\times 0}=4
\end{align}
for all $x$.
That is, the randomized experiment with $N/4$ observations is
as informative as the regression discontinuity with $N$ observations
and this holds uniformly over all levels of the assignment variable $x$.
This is the factor of $4$ from \citet{Bloom08} mentioned earlier.

\begin{figure}[t]
\centering
\includegraphics[width=.9\hsize]{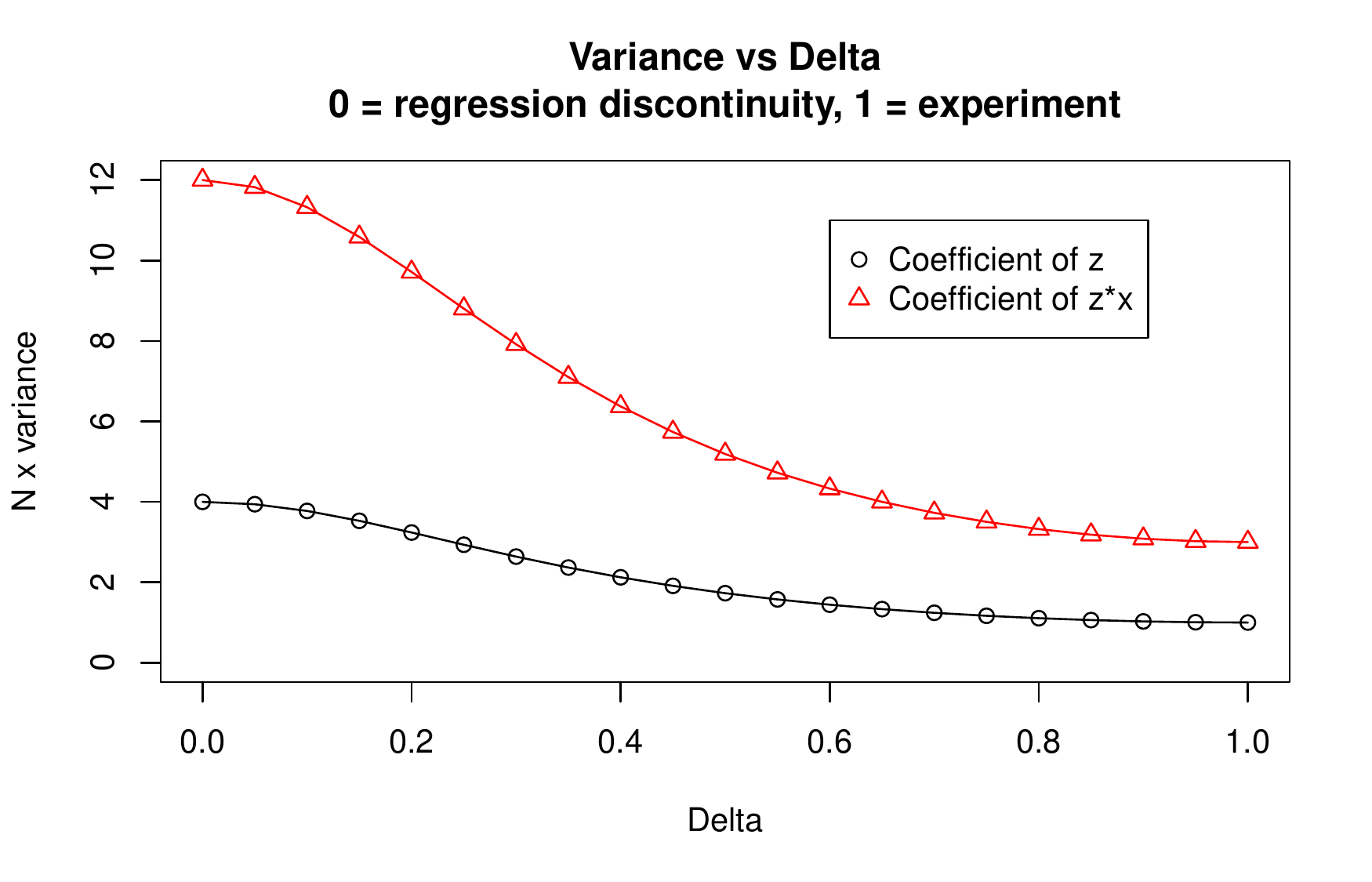}
\caption{\label{fig:zinfo}
  The top curve is the limiting value of $N\var(\hat\beta_3)$
  plotted versus the fraction $\Delta$ of experimental data
  in the hybrid.
  Here $\beta_3$ is the regression coefficient of $xz$.
  The bottom curve corresponds to $N\var(\hat\beta_2)$
  where $\beta_2$ is the coefficient of $z$.}
 \end{figure}

\begin{table}[b]
\centering
\begin{tabular}{llcc}
\toprule
Method & $\Delta$ & $\var(\hat\beta_2)$ & $\var(\hat\beta_3)$\\
\midrule
Regression discontinuity & $0$ & $4/N$ & $12/N$\\
Experiment &$1$ & $1/N$ & $3/N$\\
\bottomrule
\end{tabular}
\caption{\label{tab:variances}
Variance of $\hat\beta_2$ (treatment effect intercept)
and $\hat\beta_3$ (treatment effect slope)
under regression discontinuity ($\Delta=0$)
and randomized experiment ($\Delta=1$).
It assumes that $\var(Y\giv x,z)=1$.
}
\end{table}

Figure~\ref{fig:zinfo} shows the variance of the treatment effect
parameters as a function of $\Delta$.  Some values from the plot are
shown in Table~\ref{tab:variances}.  The regression
discontinuity design has four times the variance of the experiment as
we saw in equation~\eqref{eq:eff}.  The slope coefficient for treatment
always has three times the variance of the intercept coefficient as
follows from~\eqref{eq:varbetatwoline}.  Figure~\ref{fig:varslin} show
the variance of the estimated impact versus $x$ for several choices of
$\Delta$.

\begin{figure}[t]
\centering
\includegraphics[width=.9\hsize]{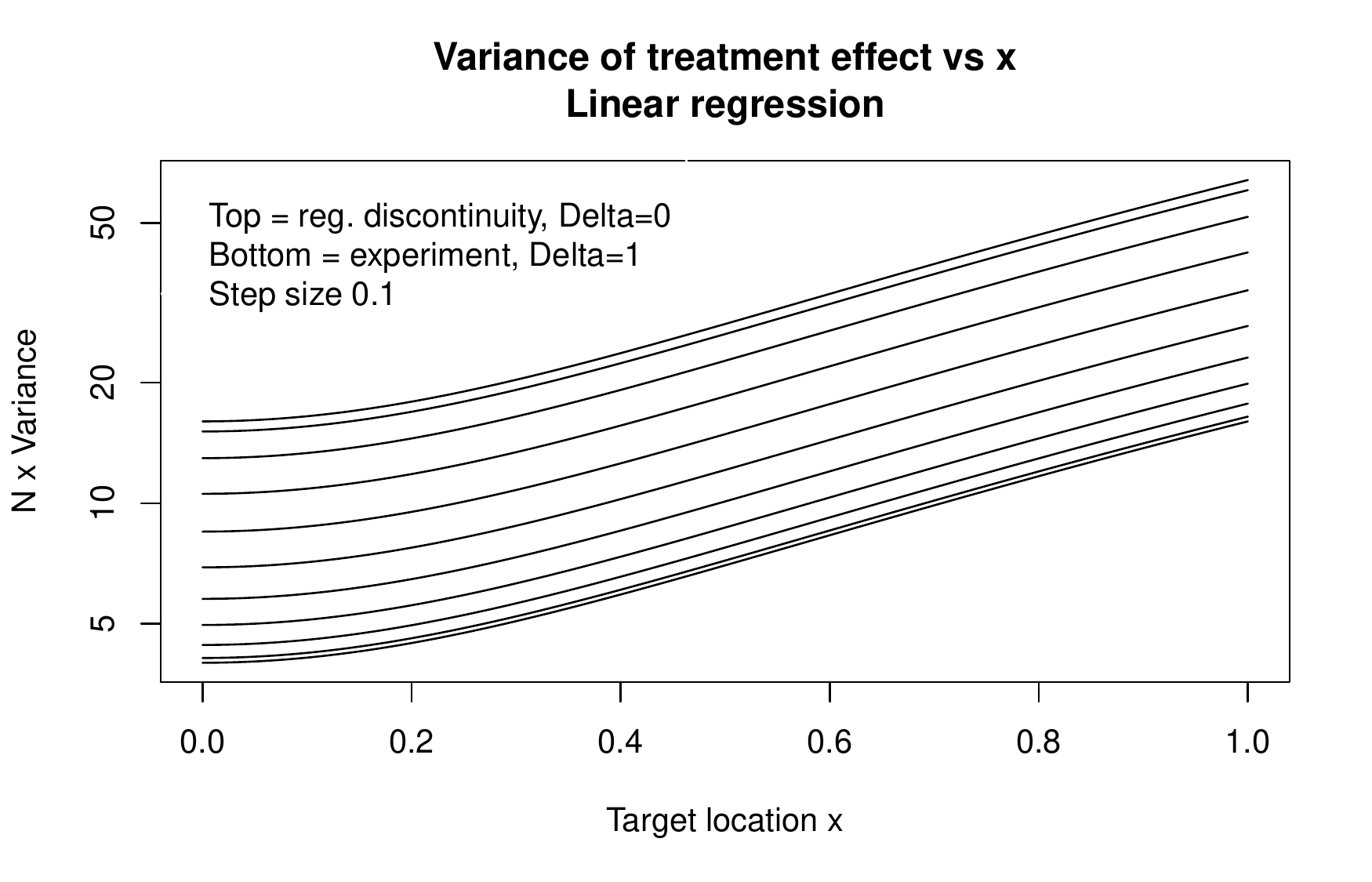}
\caption{\label{fig:varslin}
Variance of $2(\hat\beta_2+x\hat\beta_3)$
versus $x$  in the two-line
model~\eqref{eq:twolines},
for $\Delta$ between $0$ and $1$ in steps of $0.1$.
Note that the vertical axis is logarithmic.
}
 \end{figure}

\section{Cost of experimentation}\label{sec:cost}

We ordinarily expect the value of the treatment to increase
with the variable $x$.  In that case the greatest return on
the $N$ subjects in the experiment arises from the regression
discontinuity design with $\Delta=0$.
The information gain from $\Delta>0$ comes at some
cost in the present sample. This section quantifies
that cost.

For a deterministic allocation of $z=1$ or $z=-1$ we have
$
\e(Y\giv x,z) = \beta_0+\beta_1x+\beta_2z+\beta_3zx.
$
When $z$ is chosen randomly with $\Pr(z=1)=\Pr(z=-1)=1/2$, then
$
\e(Y\giv x) = \beta_0+\beta_1x.
$
It follows that the expected gain per subject in the hybrid design is
\begin{align*}
g(\Delta)=\e(Y)&\equiv\frac12\int_{-1}^{-\Delta}(\beta_0+\beta_1x-\beta_2-\beta_3x)\rd x
+\frac12\int_{-\Delta}^{\Delta}(\beta_0+\beta_1x)\rd x\\
&\quad+\frac12\int_{\Delta}^{1}\beta_0+\beta_1x+\beta_2+\beta_3x\rd x\\
&=\beta_0+\beta_3(1-\Delta^2)/2.
  \end{align*}

Neither $\beta_1$ nor $\beta_2$ appear in this gain and
the value of $\beta_0$ does not affect our choice of $\Delta$.
Only $\beta_3$ which models how the payoff from the incentive
varies with the assignment variable $x$ makes a difference.
Compared to the regression discontinuity design with $\Delta=0$,
the cost of incorporating experimentation is
$$
N(g(0)-g(\Delta)) = \frac{N\beta_3\Delta^2}2,
$$
which grows slowly as $\Delta$ increases from zero
and then rapidly as $\Delta$ approaches one.
If $\beta_3>0$, then as expected, we gain the most
from the regression discontinuity design and the least from
the experiment.  This is a classic exploration-exploitation
tradeoff.

It is possible that some settings have $\beta_3<0$.
This might happen if the incentive is additional free tutoring
in the educational context and the strongest students don't need it,
or if it is advice on how to
best use an e-commerce company's products in a context where higher
performing customers already knew about the advice.
In these cases the greatest gain comes from giving the incentive to
the bottom $N/2$ customers and not the top $N/2$ customers.
The analysis of this paper goes through by reversing the
customer ranking, thereby replacing $x$ by $-x$ and
also changing the sign of $\beta_3$.

Now we turn to optimizing the choice of $\Delta$ given some
assumptions on the relative value of the information in the
data for future decisions and the expected gain on the
experiment. The precision (inverse variance) of our estimate of $\hat\beta$
is a linear function of $N$ and so is the expected gain.
We can therefore trade off precision per subject with gain per subject.
We think that $\beta_3$ is the most important parameter so we take
the precision gain per subject to be
\begin{align}\label{eq:precision}
\pre(\Delta) \equiv \frac1{N\var(\hat\beta_3)} = \frac13 -\phi^2 = \frac13 - \frac{(1-\Delta^2)^2}4.
\end{align}
Alternatively, we could focus on $2\beta_2$ which
is both the average gain per subject and the gain for a subject at $x=0$.
The precision for $2\beta_2$ turns out to be $(3/4)\pre(\Delta)$ so it
is perfectly aligned with precision on $\beta_3$.
More generally, the gain from the incentive at any specific $x$
has a variance given by~\eqref{eq:varlingainatx}. Any weighted
average of precision of $2(\beta_2+\beta_3x)$ over points $x\in[-1,1]$ is a scalar multiple of $\pre(\Delta)$
from~\eqref{eq:precision}.

We trade off gain per subject and precision per subject with
the value function
\begin{equation}\label{eq:value}
\begin{split}
v(\Delta) &= g(\Delta) + \lambda\cdot \pre(\Delta)
=\beta_0 +\beta_3\frac{1-\Delta^2}2 + \lambda \Bigl( \frac13 - \frac{(1-\Delta^2)^2}4\Bigr),
\end{split}
\end{equation}
where $\lambda>0$ measures the value for future decisions of having greater precision
on $\beta_3$.  Because $\lambda$ is about information gain for the future
we consider it to have `long term' value while $\beta_3$ describes value in the
immediate data set, a relatively `short term' consideration.
\begin{proposition}\label{prop:bestdelta}
  Let $v(\Delta)$ be given by equation~\eqref{eq:value}
  with  $\lambda>0$ and $\beta_3\ge0$.   Then the maximum of $v$ over $\Delta\in[0,1]$
occurs at
  \begin{equation}\label{eq:deltopt}
\Delta_*=
\begin{cases}
1, &\phantom{0\le}\beta_3/\lambda\le0\\
\sqrt{1-\beta_3/\lambda}, & 0 \le \beta_3/\lambda\le 1\\
0, &1\le \beta_3/\lambda.\\
\end{cases}
  \end{equation}
\end{proposition}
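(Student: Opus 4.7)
The plan is to reduce the problem to maximizing a concave quadratic on a closed interval by a change of variable. Set $u = 1 - \Delta^2$, which maps $\Delta \in [0,1]$ bijectively and continuously (and monotonically decreasingly) onto $u \in [0,1]$. Substituting into \eqref{eq:value} gives
\[
v = \beta_0 + \frac{\beta_3}{2} u + \lambda\Bigl(\frac{1}{3} - \frac{u^2}{4}\Bigr),
\]
so as a function of $u$, $v$ is a concave quadratic (the coefficient of $u^2$ is $-\lambda/4 < 0$). This is the critical observation; everything else is bookkeeping.

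Next I would compute the unconstrained maximizer from the first-order condition $\beta_3/2 - \lambda u/2 = 0$, giving $u^\dagger = \beta_3/\lambda$. Since $v$ is strictly concave in $u$, the constrained maximizer on $[0,1]$ is the projection of $u^\dagger$ onto $[0,1]$, namely $u_* = \min\{1,\max\{0,\beta_3/\lambda\}\}$.

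Finally I would invert the change of variable using $\Delta_* = \sqrt{1 - u_*}$ and sort out the three cases. When $\beta_3/\lambda \ge 1$ we get $u_* = 1$ and $\Delta_* = 0$ (the RDD); when $0 \le \beta_3/\lambda \le 1$ we get $u_* = \beta_3/\lambda$ and $\Delta_* = \sqrt{1 - \beta_3/\lambda}$; when $\beta_3/\lambda \le 0$ (which under the hypotheses $\beta_3 \ge 0$ and $\lambda > 0$ forces $\beta_3 = 0$) we get $u_* = 0$ and $\Delta_* = 1$ (the RCT). One can verify that the boundary cases $\beta_3/\lambda \in \{0,1\}$ agree across the pieces, so $\Delta_*$ is well defined. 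I do not expect any real obstacle: the only thing to be careful about is that the substitution $u = 1 - \Delta^2$ is monotone on $[0,1]$, which makes the projection-in-$u$ correspond correctly to the maximization in $\Delta$.
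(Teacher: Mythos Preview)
Your argument is correct and is essentially the paper's own proof: both reduce the optimization to a concave quadratic on $[0,1]$ via a change of variable, find the unconstrained optimum, and project onto the interval. The only cosmetic difference is that the paper substitutes $\gamma=\Delta^2$ while you use $u=1-\Delta^2=1-\gamma$, which are affinely related and lead to the same projection-and-invert bookkeeping.
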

\begin{proof}
Let $\gamma = \Delta^2$. We will first maximize $v= c -\beta_3\gamma/2 -\lambda(1-\gamma)^2/4$
over $0\le\gamma\le1$,
where $c$ does not depend on $\gamma$. Now $v$ has a unique maximum over $\gamma\in\real$
at $\gamma_* = 1-\beta_3/\lambda$. The maximizing $\gamma$ is
$\gamma_*$ when $0\le\gamma_*\le1$, it is $0$ when $\gamma_*<0$ and it is $1$
when $\gamma_*>1$.
Equation~\eqref{eq:deltopt} translates these results back to the optimal $\Delta$.
\end{proof}

We see from equation~\eqref{eq:deltopt} that the decision depends
on the critical ratio $\beta_3/\lambda$.  The numerator reflects
the value of more efficient allocation and the denominator captures
the value of improved information gathering.
When $\beta_3\ge\lambda$ then the RDD with $\Delta=0$
is optimal. The full experiment, $\Delta=1$, is never optimal unless
$\beta_3=0$ or the value $\lambda$ of information to be used in future
decisions is infinite.

\begin{figure}[t]
\centering
\includegraphics[width=.9\hsize]{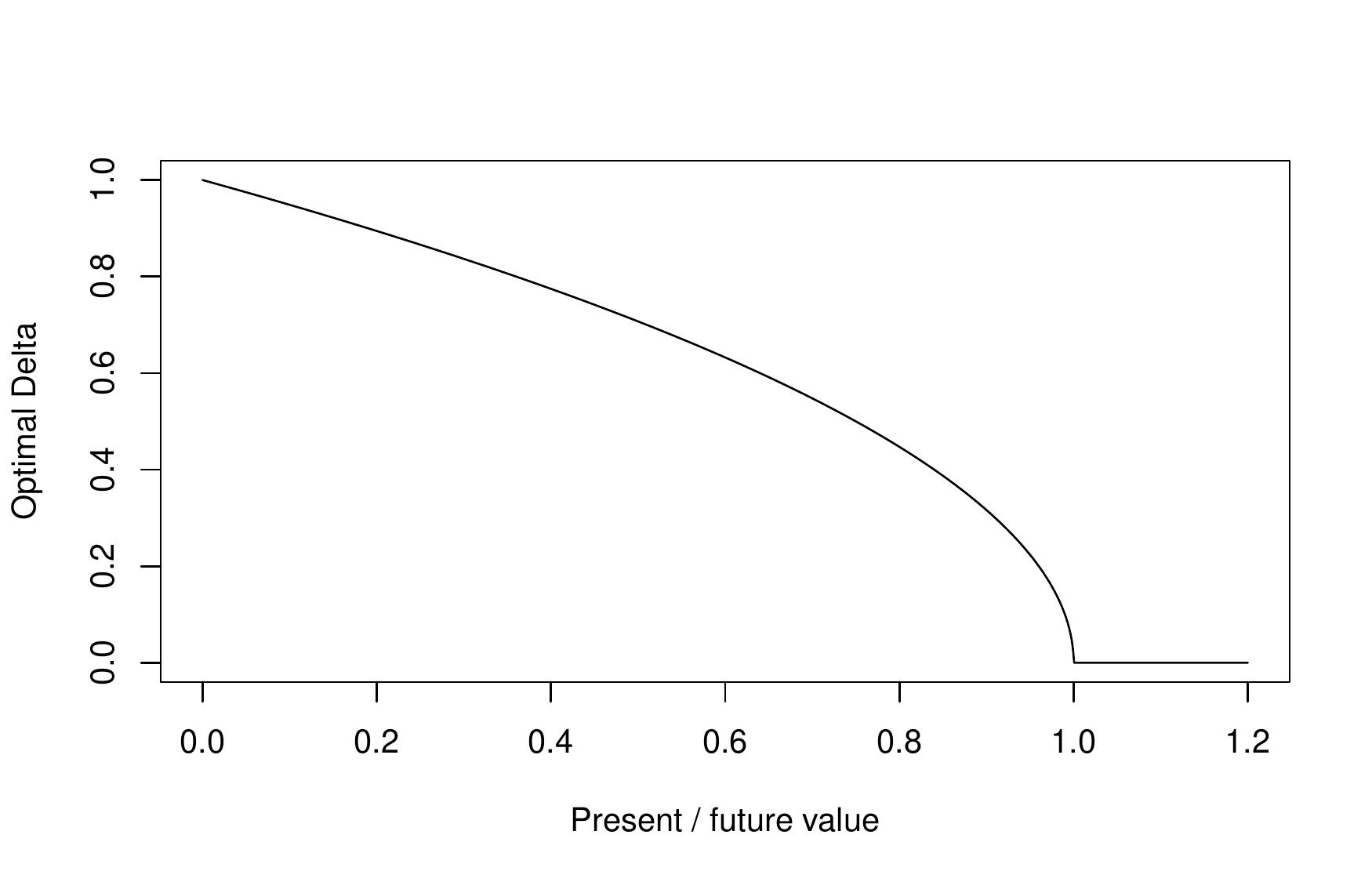}
\caption{\label{fig:optdel}
  The horizontal axis has $\beta_3/\lambda$ where
  $\beta_3$ affects immediate gain per subject of the
  treatment allocation and $\lambda$ quantifies the value
  of precise slope estimation. The vertical axis gives
  the optimal $\Delta$ in a tie-breaker experiment.
}
 \end{figure}

 Figure~\ref{fig:optdel} shows the value $\Delta_*$ from
 equation~\eqref{eq:deltopt} versus the ratio $r=\beta_3/\lambda$ of
 the short term to long term value coefficients.  The function is
 nearly equal to $1-r/2$ near the origin and has negative curvature on
 $0\le r\le1$. If future uses are important enough that $r\le1/10$,
 then one should use $\Delta\ge 1-0.1/2=0.95$.  That is, when the
 future is very important the optimal hybrid is very close to an RCT.

In practice it may well be difficult to
choose $\Delta$ to maximize the value~\eqref{eq:value}
because we don't know what $\lambda$ to
choose and because the tradeoff
depends on $\beta_3$ about which we may have little prior knowledge.
The parameter $\lambda$ will be hard to choose because
it quantifies the relative value of future
information versus the present value of the intervention.
A practical approach is to use the smallest
experiment with at least some given proportion $\rho$ of the information
available from the RCT.  That is, for some $\rho\in[1/4,1]$, choose the
smallest $\Delta$ with $\pre(\Delta)\ge \rho\times \pre(1)$.
We don't need to consider $\rho<1/4$ because even $\Delta=0$ has
at least one fourth the efficiency of the RCT.

\section{Quadratic regression}\label{sec:quadratic}

A quadratic regression model of the form
\begin{align}\label{eq:twocurves}
\e(Y) = \beta_0+\beta_1x+\beta_2z+\beta_3xz+\beta_4x^2+\beta_5x^2z
  \end{align}
allows a richer exploration of the treatment effect.
For instance, model~\eqref{eq:twocurves}
allows for the possibility that the treatment pays off if and only
if $x$ is in some interval.  It also allows for a situation where
the payoff only comes outside of some interval.
This model has even (symmetric) predictors
$1$, $xz$, $x^2$ and odd (antisymmetric) predictors
$x$, $z$, $zx^2$.  As in the linear case, the even and odd
predictors are orthogonal to each other.

Now $(1/N)\cx^\tran\cx$ is a $6\times 6$ block diagonal matrix.
Some of the entries are
$$\phi_3 \equiv\phi_3(\Delta) = \frac12\int_{-1}^1\e(z\giv x)x^3\rd x=\int_\Delta^1 x^3\rd x =
\frac{1-\Delta^4}4$$
as well as $\phi(\Delta)$ from Section~\ref{sec:efftwo}
that  we call $\phi_1(\Delta)$ here.
We find that
\begin{align}\label{eq:xtxbyn}
\frac1N\cx^\tran\cx =
\bordermatrix{& 1 & zx & x^2 & z & x & zx^2\cr
\,  1 & 1 & \phi_1 & 1/3 & 0 & 0 & 0\cr
 zx & \phi_1 & 1/3 & \phi_3 & 0 & 0&0\cr
\, x^2 & 1/3 & \phi_3 & 1/5 & 0 & 0&0\cr
\, z & 0 & 0 & 0 & 1 & \phi_1 & 1/3 \cr
\, x & 0 & 0 & 0 & \phi_1 & 1/3& \phi_3\cr
\, zx^2 & 0 & 0 & 0 & 1/3 & \phi_3 & 1/5\cr
}
\end{align}
after ignoring sampling or stratified sampling fluctuations.
Once again we get a block diagonal pattern with two identical blocks.
This is a consequence of $z^2=1$, and it will happen
for more general models with odd and even predictors.

\begin{proposition}\label{prop:quadstuff}
For $N>0$, let $\cx^\tran\cx$ be given by~\eqref{eq:xtxbyn}.
Then
\begin{align}\label{eq:xtxinvquad}
(\cx^\tran\cx)^{-1}
=\frac1{ND(\Delta)}
\begin{pmatrix}
M(\Delta) & 0\\
0 & M(\Delta)  \end{pmatrix}
\end{align}
for a $3\times 3$ symmetric matrix
$$
M(\Delta) = \begin{pmatrix}
\dfrac1{15}-\phi_3^2 & \dfrac{\phi_3}3-\dfrac{\phi_1}5 & \phi_3\phi_1-\dfrac19\\[2ex]
'' & \dfrac4{45} & \dfrac{\phi_1}3-\phi_3\\[2ex]
'' & '' & \dfrac13-\phi_1^2
\end{pmatrix},
$$
and a determinant
$
D(\Delta) =
4/{135}-{\phi_1^2}/5-\phi_3^2+(2/3)\phi_1\phi_3.
$
\end{proposition}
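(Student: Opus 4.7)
The plan is to exploit the block-diagonal structure of $(1/N)\cx^\tran\cx$ displayed in \eqref{eq:xtxbyn}. Because the matrix is block diagonal with two identical $3\times 3$ blocks
$$
B = \begin{pmatrix}
1 & \phi_1 & 1/3\\
\phi_1 & 1/3 & \phi_3\\
1/3 & \phi_3 & 1/5
\end{pmatrix},
$$
its inverse is itself block diagonal with two copies of $B^{-1}$. This reduces the claim \eqref{eq:xtxinvquad} to verifying $B^{-1}=M(\Delta)/D(\Delta)$ (after pulling out the factor $1/N$), so the whole proof is an inversion of a single symmetric $3\times 3$ matrix.

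First I would state (or invoke) the standard cofactor formula $B^{-1}=\mathrm{adj}(B)/\det(B)$ and use symmetry of $B$ to conclude that the adjugate is symmetric, matching the symmetric form claimed for $M(\Delta)$. Then I would compute the six independent cofactors one at a time: the three diagonal cofactors give $1/15-\phi_3^2$, $1/5-1/9=4/45$, and $1/3-\phi_1^2$; the three off-diagonal cofactors (with the appropriate signs) give $\phi_3/3-\phi_1/5$, $\phi_1\phi_3-1/9$, and $\phi_1/3-\phi_3$. These land exactly on the entries of $M(\Delta)$ displayed in the proposition.

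Next I would compute $\det(B)$ by cofactor expansion along the first row, which recycles the three cofactors already computed: $\det(B)=1\cdot(1/15-\phi_3^2)-\phi_1\cdot(\phi_1/5-\phi_3/3)+(1/3)(\phi_1\phi_3-1/9)$. Collecting the constant terms via $1/15-1/27=4/135$ and combining the two copies of $\phi_1\phi_3/3$ yields
$$
D(\Delta)=\frac{4}{135}-\frac{\phi_1^2}{5}-\phi_3^2+\frac{2}{3}\phi_1\phi_3,
$$
as claimed. Finally I would assemble the two inverted blocks into the block-diagonal form \eqref{eq:xtxinvquad}, absorbing the $1/N$ scaling from $(1/N)\cx^\tran\cx$ into the overall prefactor $1/(ND(\Delta))$.

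The proof has no real obstacle beyond bookkeeping; the only place one can slip is the determinant calculation, because the cross term $(2/3)\phi_1\phi_3$ comes from two separate expansions and the constant $4/135$ requires carrying the common denominator. I would double-check the sign on the $B_{12}$ cofactor and the arithmetic $1/15-1/27=12/405=4/135$ explicitly to make sure the stated $D(\Delta)$ is reproduced exactly.
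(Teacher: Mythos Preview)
Your proposal is correct and is essentially the same computation as the paper's proof: both reduce to inverting the single $3\times3$ block $B$, and the only difference is that you derive $B^{-1}$ via the adjugate/cofactor formula while the paper simply verifies the claimed answer by multiplying $M(\Delta)$ by $B$ and checking that the product is $D(\Delta)I_3$. Your cofactor and determinant calculations are all correct, so either route completes the proof.
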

\begin{proof}
  Multiplying $M(\Delta)$ above by the upper left $3\times3$ submatrix in~\eqref{eq:xtxbyn}
  yields $I_3$ times $D(\Delta)$, after some lengthy manipulations.
\end{proof}

Figure~\ref{fig:vars} show the variance of the estimated impact versus
$x$ for several choices of $\Delta$.  Notice that the variance is
given on a logarithmic scale there.  The regression discontinuity
design $\Delta=0$ in the top curve there, has extremely large
variances especially where $|x|$ is close to $1$. The randomized
design at the bottom has much smaller variance.  Even the maximum
variance in the RCT (at $x=1$) is smaller than the minimum variance in
the RDD (at $x=0$).

\begin{figure}
\centering
\includegraphics[width=.9\hsize]{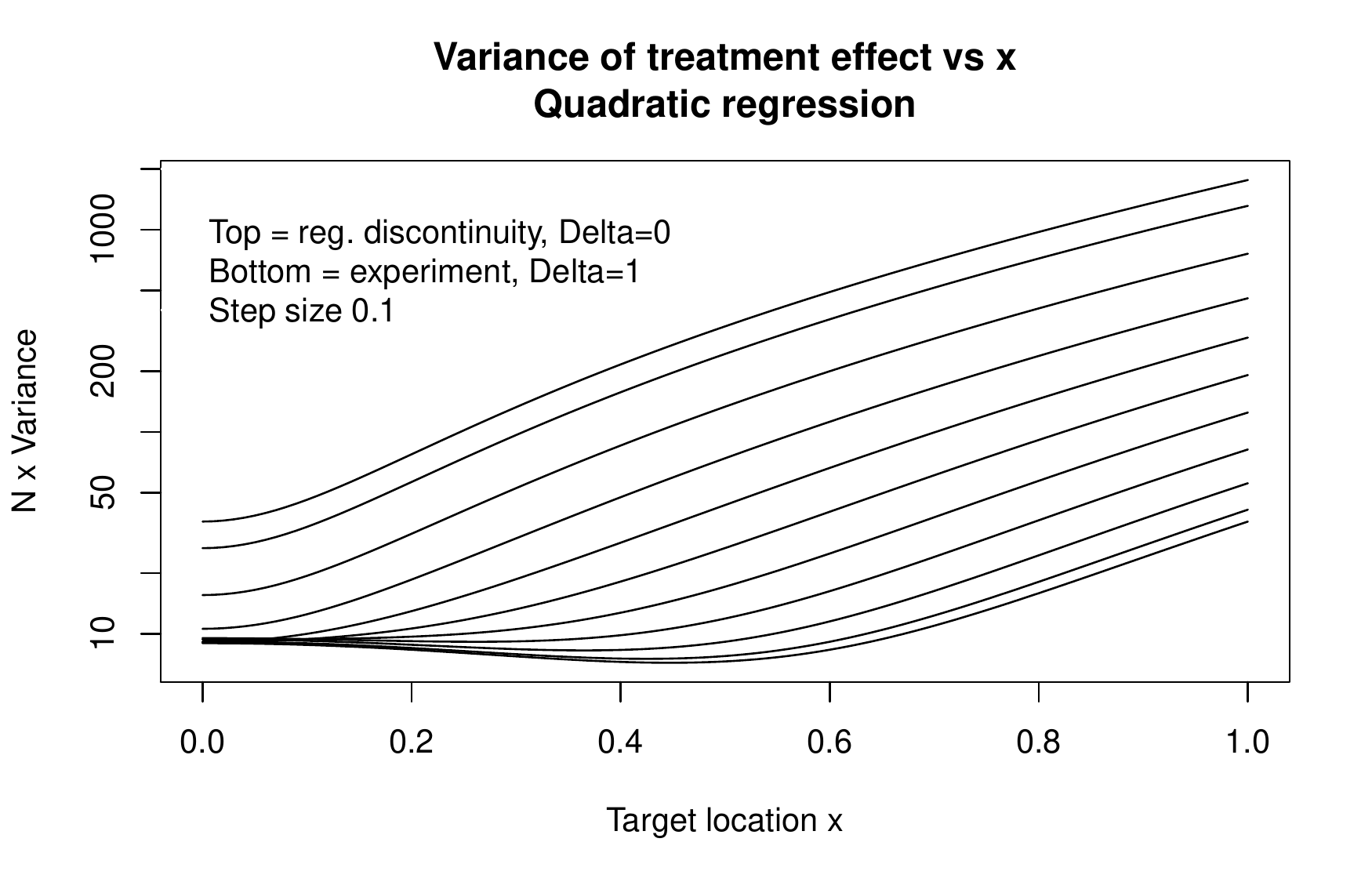}
\caption{\label{fig:vars}
Variance of $2(\hat\beta_2+x\hat\beta_3+x^2\hat\beta_5)$
versus $x$ in the quadratic model~\eqref{eq:twocurves},
for $\Delta$ between $0$ and $1$ in steps of $0.1$.}
\end{figure}

\section{Gaussian case}\label{sec:gaussian}

In some settings, the original assignment variable $x$ might have a
nearly Gaussian distribution.  By changing location and scale we can
suppose that $x$ has approximately the $\dnorm(0,1)$ distribution,
without loss of generality.  We use $\varphi(\cdot)$ and $\Phi(\cdot)$
to represent the $\dnorm(0,1)$ probability density function and
cumulative distribution function, respectively.

We will experiment on the central data
with $|x_i|\le\tau$ choosing $\tau$ to get
a fraction $\Delta$ of data in the experiment.
That leads to $\tau = \Phi^{-1}( (1+\Delta)/2)$.
After reordering the variables we find in this case that
$$
\frac1N\cx^\tran\cx \approx
\bordermatrix{& 1 & zx & z & x\cr
\,  1 & 1 & \phi_G & 0 & 0\cr
 zx & \phi_G & 1 & 0 & 0\cr
\, z & 0 & 0 & 1 & \phi_G\cr
\, x & 0 & 0 & \phi_G & 1\cr
}.
$$
The value of $\phi$ from the uniform case changes to
\begin{align*}
\phi_G & =
\int_{-\infty}^{-\tau}(-x)\varphi(x)\rd x
+\int_{\tau}^{\infty}x\varphi(x)\rd x
  =2\int_{\tau}^{\infty}x\varphi(x)\rd x\\
&=2\varphi(\tau) = 2\varphi( \Phi^{-1}((1+\Delta)/2)).
\end{align*}
Compared to the uniform scores case, the diagonal has changed from $(1,1/3,1,1/3)$ to $(1,1,1,1)$.
Now
\begin{align}\label{eq:varbetatwolinegaus}
N\times\var\left(
\begin{pmatrix}
\hat\beta_0\\
\hat\beta_3\\
\hat\beta_2\\
\hat\beta_1
\end{pmatrix}
\right)
=\frac1{1-\phi_G^2}
\begin{pmatrix}
1 & -\phi_G & 0 & 0\\
-\phi_G &  1 & 0 & 0\\
0 & 0 & 1 & -\phi_G\\
0 & 0 & -\phi_G & 1\\
\end{pmatrix}.
\end{align}
For this Gaussian case, all $4$ estimated coefficients $\hat\beta_j$
have the same variance, equal to $1/(1-\phi_G^2)$.  The variances for
uniform assignment variables were not all the same.  The difference stems
from the points $x_i$ having variance $1/3$ in the uniform case
instead of variance $1$ here.  As before as $\Delta$ increases,
$\phi_G$ also increases and so $\var(\hat\beta_j)$ decreases.

Now we work out the efficiency of the RCT compared to the RDD.  For
the RCT, $\Delta=1$ yields $\tau =\infty$ and then $\phi_G=0$. For the
RDD, $\Delta=0$ yields $\tau=0$ and then $\phi_G=2\varphi(0)$.  Thus
the efficiency of the RCT compared to the RDD is
\begin{align*}
\frac{1}{ 1-[2\varphi(0)]^2}
&= \frac{\pi}{\pi-2}\doteq 2.75
\end{align*}
as reported by \citet{gold:1972}.
This is somewhat less than the efficiency gain of $4$ in
the uniform case.
The efficiency versus $\Delta$ (not shown) has a
qualitatively similar shape to the black curve for the coefficient of $z$
in the uniform case (Figure~\ref{fig:zinfo}).

\section{Sliding scales}\label{sec:sliding}

In the tie-breaker design, there are three levels of subjects
getting the treatment condition with probabilities 0\%, 50\% and 100\%.
We could use a more general sliding scale where this probability
rose from $0$ to $100$\% in a sequence of smaller steps, or even
rose continuously as a function $p(x)$ of the assignment variable $x\in[-1,1]$.
Figure~\ref{fig:carpentry} has an example of each type.
We show here that there is little to gain from such a sliding scale
in the case where half the subjects will be treated
and half will not.
At the end of this section we point to an advantage
of using treatment probabilities
$\epsilon$, $1/2$ and $1-\epsilon$ where $0<\epsilon<1/2$.

\begin{figure}
\centering
\includegraphics[width=.9\hsize]{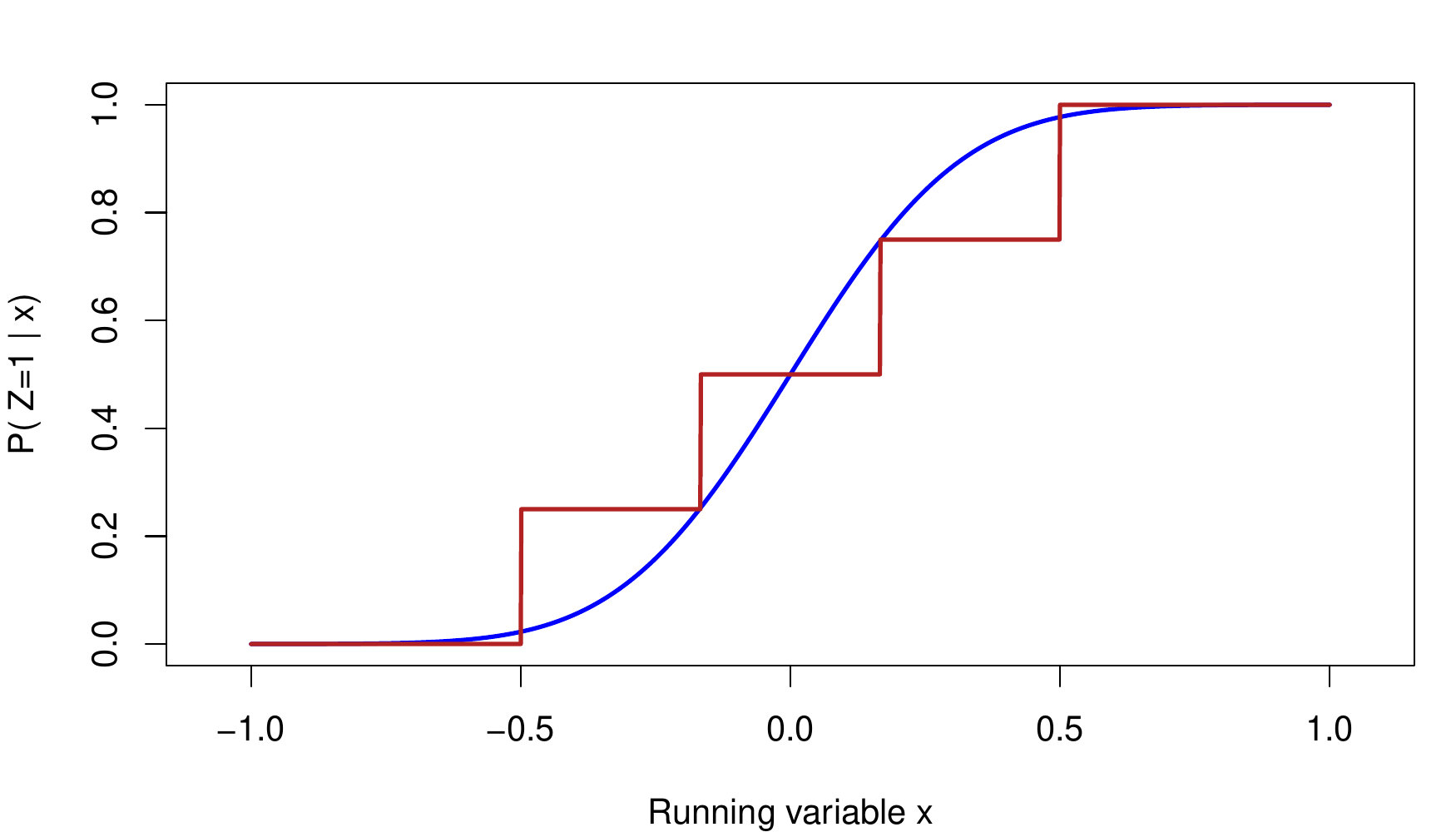}
\caption{\label{fig:carpentry}
The smooth curve shows a sliding
scale with $\Pr(z=1\giv x)=\Phi(4x)$,
where $\Phi$ is the $\dnorm(0,1)$ cumulative distribution function.
The step function has $\Pr(z=1\giv x)$ increasing from $0$ in steps.
}
\end{figure}

Suppose first that $p(x)$ satisfies $p(x)=1-p(-x)$
and is non-decreasing. For instance, $p$ could be the
cumulative distribution function (CDF) of a symmetric distribution.  A proper CDF would
ordinarily have $p(-1)=0$ and $p(1)=1$ too,
but we do not need to impose that.
With this $p(x)$, the expected number of treated
cases is $N/2$.
The variance-covariance matrix of $\hat\beta$ in our model
is then the same as in~\eqref{eq:xtx} with
$\phi(\Delta)$ replaced by
$$\overline{zx}\equiv
\frac12\int_{-1}^1 x\e(z\giv x)\rd x
=\frac12\int_{-1}^1 x(2p(x)-1)\rd x.
$$
For symmetric $p$,
$\overline{z} = (1/2)\int_{-1}^1 (2p(x)-1)\rd x=0$
and
$\overline{zx^2} = (1/2)\int_{-1}^1 x^2(2p(x)-1)\rd x=0$.

Now let's consider the expected gain on the $N$ subjects in the trial.
The short term gain, averaged over $z$, is
$$
\frac12\int_{-1}^1 \beta_0 + \beta_1x+\beta_2\e(z\giv x)\rd x =
\beta_0+\beta_3\overline{zx}.
$$
Supposing as before that $\beta_3>0$, we see that the tradeoff between
immediate value and information gained is driven by the single
variable $\overline{zx}$ and not by whether $\e(z\giv x)$ is on a
continuous sliding scale or simply at three levels $0$, $0.5$ and
$1.0$.  For tie-breaker designs $\overline{zx} =
(1-\Delta^2)/2\in[0,1/2]$ for $0\le\Delta\le 1$.  If $p(x)$ is
symmetric then we find that
$$
\overline{zx} = 2\int_0^1xp(x)\rd x-\frac12.
$$
For $x>0$, we know that $1/2 = p(0)\le p(x)\le1$.  It follows that
$0\le\overline{zx}\le1/2$. In other words, the full range of
exploration-exploitation tradeoffs available from a sliding scale with
symmetric $p(x)$ is already available in a tie-breaker design.

Now suppose that we relax the symmetry
constraint on $p(x)$, while still having 50\% allocation, that is $\bar z=0$.
The short term gain still depends only on $\overline{zx}$.
The other ingredient in the tradeoff is
$\var(\hat\beta)$ which depends on $\bar z$,
$\overline{zx}$ and $\overline{zx^2}$.
To see what can be gained by symmetrizing we compare $p(x)$
to a symmetric alternative $\tilde p(x) = (p(x)+1-p(-x))/2$.
This has $\tilde p(x) +\tilde p(-x) = 1$, and so $\tilde p$
is symmetric as described above.
Denote the result of replacing $p$ by $\tilde p$ in
the definitions of $\bar z$, $\overline{zx}$ and
$\overline{zx^2}$ by $\tilde z$, $\wt{zx}$ and $\wt{zx^2}$, respectively.
We find that $\tilde z =\bar z$ and $\overline{zx}=\wt{zx}$,
and so symmetrizing has not changed these quantities.
However, symmetrizing makes $\wt{zx^2}=0$ which is not necessarily equal to $\overline{zx^2}$.
We will see that $\overline{zx^2}$ enters our
expression for $\var(\hat\beta)$
only through $\overline{zx^2}^2\ge\wt{zx^2}^2=0$.

After some algebra our expression for $N\var(\hat\beta)$ yields
$$
N\var\begin{pmatrix}\hat\beta_0\\ \hat\beta_1\end{pmatrix}
\doteq N\var\begin{pmatrix}\hat\beta_2\\ \hat\beta_3\end{pmatrix}
\doteq
(A-BA^{-1}B)^{-1}
$$
where
$$
A=
\begin{pmatrix}
1 & 0\\
0 & \frac13
\end{pmatrix}\quad\text{and}\quad
B=
\begin{pmatrix}
\barz^2 +3\barzx^2& \barz\,\barzx +3\barzx\,\barzxx \\
\barz\,\barzx +3\barzx\,\barzxx
& \barzx^2+3\barzxx^2
\end{pmatrix}.
$$
Then
\begin{align}\label{eq:varsym}
\begin{split}
\var(\hat\beta_1)&\doteq\var(\hat\beta_3) \doteq\frac1N\frac{1-3\barzx^2}D,\quad\text{and} \\
\var(\hat\beta_0)&\doteq\var(\hat\beta_2) \doteq\frac1N\frac{1/3-\barzx^2-3\barzxx^2}D,
\end{split}
\end{align}
for a determinant
$$
D = \frac13 - 2\overline{zx}^2-3\overline{zx^2}^2+3\overline{zx}^4.
$$

Symmetrizing can increase but not decrease $D$ because $\wt{zx^2}^2=0$.
Symmetrizing does not change the numerators
for $\hat\beta_1$ and $\hat\beta_3$ and so it
can reduce but not increase their approximate variance expressions.
The cases of $\hat\beta_0$ and $\hat\beta_2$ are more
complicated because symmetrizing changes  both their
numerators and denominators.
However some straightforward calculus shows that
those expressions are minimized when $\overline{zx^2}^2=0$
so they cannot be increased by symmetrization.

It is possible that $\var(c^\tran\hat\beta)$
can be increased by symmetrization for some
values of $c\in\real^4$.  An example of this type
can be constructed with $p(x)=|x|$ and
$\tp(x)=1/2$.  Then $\var( \hat\beta_1+\hat\beta_3)$
is increased under symmetrization. The same
holds for $\var(\hat\beta_2+\hat\beta_3)$ where
$\hat\beta_2+\hat\beta_3$
is half of the expected treatment gain at $x=1$.

The most consequential coefficient is $\beta_3$.
A symmetric sliding scale cannot improve its
estimation compared to a tie-breaker design
at a given short term cost.
A non-symmetric sliding scale cannot improve
over a symmetric one when half of the cases are treated.
Thus, when half of the cases are to be treated,
the original tie-breaker design is optimal at
any given level of $\overline{zx}$.

One drawback of using
treatment probabilities $0$ and $0.5$ and $1$
is that some of the potential treatment allocations are deterministic.
Methods based on the potential outcomes framework
(see \cite{imbens2015causal}) cannot then be readily applied.
We could instead use three levels $\epsilon$, $0.5$
and $1-\epsilon$ with the central $\Delta$ of subjects
having $p(x)=0.5$.  Then we find that the critical quantity
governing both statistical and allocation efficiency becomes
$$\barzx = \frac12\int_{-1}^{-\Delta}x(2\epsilon-1)\rd x
+\frac12\int_\Delta^1x(2(1-\epsilon)-1)\rd x
=(1-\epsilon)\frac{1-\Delta^2}2.
$$
The consequence is that we can find a design only
in the range $0\le \barzx\le (1-\epsilon)/2$
instead of $0\le \barzx\le 1/2$.
For small $\epsilon$, this is only a mild reduction in the attainable range,
and it still requires only three levels of treatment probability.

\section{General numerical approach}\label{sec:general}

The two line model for an assignment variable $x$ with
a symmetric distribution made it simple to study
central experimental windows of the form $(-\Delta,\Delta)$.
In that setting the means of $x_i$
and $z_i$ were both zero, and the variance of parameter
estimates depended simply on just one quantity $\Delta$.
We may want to use a more general regression model,
allow experimental windows that are not centered
around the middle value of $x$, have $x$ values
that are not uniform or Gaussian, and we might also want to use
models other than two regression lines.

There might even be more than one assignment variable as in
\cite{abdulkadiroglu2017impact}.  The price for this flexibility is
high; users have to answer some hard questions about their goals, and
then do numerical optimization over parameters with a potentially
expensive Monte Carlo inner loop.  In this section we show that the
inner loop can be done algebraically.  We also find that the full
experiment $\Delta=1$ with $\Pr(z_i=1)=1/2$ is variance optimal.

We suppose that prior to treatment assignment, subject $i$
has a known feature vector $F_i\in\real^d$ which includes
an intercept variable equal to $1$, but not the treatment
variable $z_i$. For instance in the linear and quadratic models,
the features $F_i$ are $(1,x_i)^\tran$ and $(1,x_i,x_i^2)^\tran$, respectively.
In the regression model
$$
Y_i = F_i^\tran\beta + z_i F_i^\tran\gamma +\err_i,
$$
we have $\e(Y_i) = F_i^\tran(\beta+\gamma)$ for the treated
subjects $i$ and $\e(Y_i) = F_i^\tran(\beta-\gamma)$ for the
others. Here $\gamma\in\real^d$ models the effect of treatment.

The generalized tie-breaker study
works with a vector $\theta\in\real^d$ and sets
$$
\Pr(z_i=1\giv x_i)= \begin{cases}
\phm1, &   \theta^\tran F_i \ge\Delta\\
\phm p, & |\theta^\tran F_i|<\Delta\\
-1, &   \theta^\tran F_i \le -\Delta,
\end{cases}
$$
for some fixed $p\in(0,1)$, not necessarily $1/2$.
Because $F_i$ contains an intercept term, the experimental
window $|\theta^\tran F_i|<\Delta$ need not be centered
on a central value of $\theta^\tran F_i$.
The analyst must now choose $\Delta\ge0$,
$\theta\in\real^d$ and $p\in(0,1)$.

The analogue of our previous approach is to find the matrix
$(\cx^\tran\cx)^{-1}$ where
$$
\cx^\tran\cx =
\begin{pmatrix}
  A & B \\
  B & A
  \end{pmatrix},\quad A = \sum_i F_iF_i^\tran,\quad B = \sum_i w_{i}F_iF_i^\tran,
$$
for
$$
w_{i}= \e( z_i \giv F_i)=
\begin{cases}
1, & \theta^\tran F_i \ge\Delta,\\
2p-1, & |\theta^\tran F_i|<\Delta,\\
-1, & \theta^\tran F_i \le -\Delta.
\end{cases}
$$
The lower right corner of $\cx^\tran\cx$
is $A$ because $\e(z_i^2\giv F_i)=1$.
Averaging over the outcomes of $z_i$ this way is
statistically reasonable when $N\gg d$.
If $\err_i$ are independent with mean zero and variance $\sigma^2$, then
$$
\var\left(
\begin{pmatrix}
  \hat\beta\\
  \hat\gamma
  \end{pmatrix}
\right)
=(\cx^\tran\cx)^{-1}\sigma^2.
$$
This averages over the outcomes $\err_i$ so that
they do not have to be simulated.

One can now do brute force numerical search
for good values of $\theta$ and $p$ and $\Delta$.
A good choice would yield a favorably small
$\var(\hat\gamma)$.  A bad choice will yield a larger
variance covariance matrix. A very bad choice would
lead to singular $\cx^\tran\cx$ and one would
of course reject the corresponding triple $(\theta,\Delta,p)$.
For instance, such a singularity
would happen if $\max_i \theta^\tran F_i <-\Delta$
which is an obviously poor choice because then no
subjects would be in the treatment group.

Using a formula for the inverse of a block matrix we get
$$
\var(\hat\gamma) = \var(\hat\beta) = (A-BA^{-1}B)^{-1}\sigma^2
$$
and $\cov(\hat\beta,\hat\gamma) = -A^{-1}B(A-BA^{-1}B)^{-1}\sigma^2$.
In an RCT with $p=1/2$ we have $B=0$. For $\Delta<\max_i(|\theta^\tran F_i|)$
certain components of $B$  become nonzero.
That can increase $BA^{-1}B$ but not decrease it. As a consequence,  $\var(\hat\gamma)$
cannot be made smaller than it is under the RCT for any choice
of $\Delta$ given $\theta$, when $p=1/2$.

\section{Non-central experimental regions}\label{sec:noncentral}

Our treatment of the two line model assumed that the
experimental region was in the center of the range
of the assignment variable.
A customer loyalty program might well reward
just the top few customers and a scholarship program
will ordinarily award scholarships to fewer than half of the students.
We analyze that case
and compare statistical efficiency of a tie-breaker
design in the upper quantiles to an RDD there.
We find that the tie-breaker experimenting on
the second decile is about $1.62$ times as
efficient as an RDD with a threshold at the 85'th percentile,
both of which offer the treatment to 15\% of subjects.
If we find upon seeing the data that the linear model
is too biased and reduce the bias by only looking
at the top $30$\% of data, then on that data subset,
using the tie-breaker becomes $1.63$\% times as efficient
as the RDD. That is, the efficiency is virtually the same.

To handle designs where fewer than half of the subjects
are treated we let
\begin{align}\label{eq:hybridzab}
\Pr(z_i=1\giv x_i) =
\begin{cases}
\phm1, & x_i \ge b\\
\phm p, & a<x_i<b\\
-1, & x_i \le a
\end{cases}
\end{align}
for $a\le b$ and $0<p<1$.
We abuse notation a little by having the function $p(x)$
subsume all three cases in~\eqref{eq:hybridzab}.
For a less expensive treatment
we might want to offer it to the
top $50$\% of subjects and then randomize it to the
bottom $50$\% and~\eqref{eq:hybridzab}
can handle this choice too.

Let the assignment variable $x\in\real$ be random with $\e(x^4)<\infty$.
Then letting $\cx$ be the design matrix in the two line regression,
and noting that $z^2=1$, we have
$$
\frac1N\cx^\tran\cx =
\begin{pmatrix}
1 & \e(x) & \e(z) & \e(xz)\\
\e(x) & \e(x^2) & \e(xz) &\e(x^2z)\\
\e(z) & \e(xz) & 1 & \e(x)\\
\e(xz) & \e(x^2z) & \e(x) & \e(x^2)\\
  \end{pmatrix}
+O_p\Bigl(\frac1{\sqrt{N}}\Bigr)
$$
under random sampling of $x_i$ and $z_i$ given $x_i$ for $i=1,\dots,N$.
The $O_p(N^{-1/2})$ error holds because $\e(x^4)<\infty$.
The error could be $o_p(N^{-1/2})$ if $p(x)$ is a simple enough
function to make stratification tractable.

We can center $x$ so that $\e(x)=0$ and then
$$
\var(\hat\beta) \doteq
\frac1N\begin{pmatrix}
  D & C \\
  C & D
\end{pmatrix}^{-1},\ \text{for}\ \,
C =
\begin{pmatrix}
\e(z) & \e(xz)\\
\e(xz) &\e(x^2z)\\
\end{pmatrix}\ \,\text{and}\ \,
D =
\begin{pmatrix}
1 & 0\\
0 &\e(x^2)\\
\end{pmatrix}.
$$
We can scale $x$ to get $\e(x^2)=1$ so that $D=I_2$.
We retain more general scaling because $x\sim \dunif[-1,1]$
has $\e(x^2) =1/3$ and rescaling would require
working with the less convenient distribution
$\dunif[-\sqrt{3},\sqrt{3}]$.

We need the inverse of a block diagonal matrix containing
just two unique square blocks.  The following proposition specializes
block matrix inversion to our case.

\begin{proposition}\label{prop:2x2}
  Let $D$ be an invertible matrix and $C$
be a square matrix with the same dimensions as $D$.
If $D-CD^{-1}C$ is invertible, then
  $$
  \begin{pmatrix}
    D & C\\
    C & D
    \end{pmatrix}^{-1}
=
  \begin{pmatrix}
    A & B\\
    B & A
    \end{pmatrix}
  $$
for $A=(D-CD^{-1}C)^{-1}$ and $B=-ACD^{-1}$.
  \end{proposition}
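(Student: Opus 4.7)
The plan is to verify the claim by direct block multiplication. Writing $\widetilde M = \begin{pmatrix} A & B \\ B & A \end{pmatrix}$ for the proposed inverse and $M = \begin{pmatrix} D & C \\ C & D \end{pmatrix}$, I would compute the four blocks of $\widetilde M M$ and check that the product is the identity.

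Substituting $B = -ACD^{-1}$, the $(1,1)$ block is $AD + BC = AD - ACD^{-1}C = A(D - CD^{-1}C) = AA^{-1} = I$, which uses exactly the defining identity for $A$. The $(1,2)$ block is $AC + BD = AC - ACD^{-1} D = AC - AC = 0$, which only uses $D^{-1}D = I$. Because both $\widetilde M$ and $M$ have identical diagonal blocks and identical off-diagonal blocks, the $(2,2)$ and $(2,1)$ block computations reduce to the same algebra and give $I$ and $0$ respectively. Hence $\widetilde M M = I$, and since the factors are square of equal size, a one-sided inverse is automatically two-sided, so $M^{-1} = \widetilde M$.

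There is no substantive obstacle here; the proposition reduces to a short algebraic identity. The only care required is keeping the multiplication order straight so that each $D$ cancels against the $D^{-1}$ hidden inside $B$, and so that $ACD^{-1}C$ combines with $AD$ to form $A(D - CD^{-1}C)$. One could alternatively motivate the symmetric block form of $M^{-1}$ a priori from the fact that $M$ commutes with the block-swap $\begin{pmatrix} 0 & I \\ I & 0 \end{pmatrix}$, which forces $M^{-1}$ to inherit the same symmetry, but the direct verification above is more economical and produces the explicit formulas for $A$ and $B$ simultaneously.
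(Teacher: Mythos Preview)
Your proposal is correct and matches the paper's own proof essentially line for line: both verify the identity by direct block multiplication, reducing to the two checks $AD+BC=A(D-CD^{-1}C)=I$ and $AC+BD=AC-ACD^{-1}D=0$, with the remaining blocks handled by the evident symmetry.
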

\begin{proof}
Multiplying,
$$
  \begin{pmatrix}
    A & B\\
    B & A
    \end{pmatrix}
  \begin{pmatrix}
    D & C\\
    C & D
    \end{pmatrix}
=  \begin{pmatrix}
    AD+BC & AC+BD\\
    BD+AC & BC+AD
    \end{pmatrix}.
$$
Now $AC+BD=AC-ACD^{-1}D=0$
and $AD+BC=A(D-CD^{-1}C)=I$.
  \end{proof}

Using Proposition~\ref{prop:2x2} we get
$$
\var(\hat\beta) \doteq\frac1N
\begin{pmatrix}
\phm(D-CD^{-1}C)^{-1}\phantom{CD^{-1}} & -(D-CD^{-1}C)^{-1}CD^{-1} \\
-(D-CD^{-1}C)^{-1}CD^{-1}& \phm(D-CD^{-1}C)^{-1}\phantom{CD^{-1}}  \\
  \end{pmatrix}.
$$
Our primary interest is in $\var(\hat\beta_3)$,
for the coefficient of $xz$.  This is the
lower right element of $(D-CD^{-1}C)^{-1}$.
Now
\begin{align*}
D-CD^{-1}C
&= \begin{pmatrix}
1-\e(z)^2-\e(xz)^2/\e(x^2)
 &-\e(xz)\e(z)-\e(x^2z)\e(xz)/\e(x^2)\\
-\e(xz)\e(z)-\e(x^2z)\e(xz)/\e(x^2)
 &\e(x^2)-\e(xz)^2-\e(x^2z)^2/\e(x^2)\\
\end{pmatrix}\\
&\equiv
\begin{pmatrix} M_{11} & M_{12}\\
  M_{12} & M_{22}
  \end{pmatrix},
\end{align*}
and so
$$
\var(\hat\beta_3) = \frac1N\frac{M_{11}}{M_{11}M_{22}-M_{12}^2}.
$$

The asymptotic value of $N\var(\hat\beta_3)$ depends on certain integrals.
For the case of primary interest to us with $x\sim \dunif[-1,1]$,
and $p(x)=1/2$ in the experimental region, these are
\begin{align*}
\e(x^2) &= \frac12\int_{-1}^1 x^2\rd x = \frac13,\\
\e(xz) & = \frac12\int_{-1}^a(-x)\rd x + \frac12\int_{b}^1x\rd x
=\frac12-\frac{a^2+b^2}4,\\
 \e(z) & = -\frac12(a+1) +\frac12(1-b) = -\frac{a+b}2,\quad\text{and}\\
\e(x^2z) & = \frac12\int_{-1}^a(-x^2)\rd x + \frac12\int_{b}^1x^2\rd x
=-\frac{a^3+b^3}6.
  \end{align*}

\begin{table}\centering
\begin{tabular}{lrrr}
\toprule
Method & $a\ \,$ & $b\ \,$ & $\var(\hat\beta_3)$\\
\midrule
Experiment &$-1.00$ & $1.00$ & $3.00/N$\\
RDD & $0.00$ & $0.00$ & $12.00/N$\\
Bottom 50\% & $-1.00$ & $0.00$ & $13.09/N$\\
\midrule
Skew RDD (85th) &  $0.70$ & $0.70$ & $223.44/N$\\
Second 10\% & $0.60$ & $0.80$ & $137.56/N$\\
\bottomrule
\end{tabular}
\caption{\label{tab:varoffctr}
Variance of $\hat\beta_3$ (treatment effect slope)
for some central and non-central experimental regions.
}
\end{table}

Table~\ref{tab:varoffctr} shows $\var(\hat\beta_3)$
for various designs when $x\sim\dunif[-1,1]$. The first two are the
full experiment and the RDD discussed previously.
Next is an experiment on just the bottom half of $x$.
This strategy is inadmissible by our criteria. It has
more variance than the RDD and also lower allocation
efficiency.

Next, the table compares some options we might have
when only $15$\% of subjects can get the treatment.
The first one is to do an RDD with the critical point
at the $85$'th percentile.
Alternatively we could choose a tie-breaker design
giving the top $10$\% of subjects the treatment
along with a randomly chosen half of the second $10$\%
of customers.
The skewed RDD has $223.44/137.56 \doteq 1.62$
times the variance for $\hat\beta_3$ compared to
running the tie-breaker on the second $10$\%. Put another way, the tie-breaker
design reduces $\var(\hat\beta_3)$ by a factor of roughly $0.6$.

In a setting like this we might find after gathering the data that
the working linear model fits poorly over the whole range of $x$
and the model would then have severe bias.
An alternative is to just analyze the top $30$\% of subjects.
In that case, the skewed RDD becomes a usual RDD and the
tie-breaker becomes an experiment with $\Delta = 1/3$.
Working on only $30$\% as many observations
over a narrower range of $x$ values will increase the
variance for both models.
The efficiency of this tie-breaker compared to the RDD is
$1+3(1/3)^2(2-(1/3)^2)\doteq 1.63$, almost identical to
what we find for the designs in the table.

\section{Discussion}\label{sec:discussion}

In an incentive plan, a regression discontinuity design rewards the a
priori best customers but it has severe disadvantages if one wants to
follow up with regression models to measure impact.
There is a tradeoff between estimation efficiency and allocation
efficiency. Proposition~\ref{prop:bestdelta} provides a principled
way to translate estimates or educated guesses about the present
value of the incentives and future value of information into a
choice of $\Delta$ in a hybrid experiment.

In commercial settings, the incentive under study will change
over time. Experience with similar though perhaps
not identical prior incentive plans then gives
some guidance for making the tradeoff.
A simpler approach
is to do the smallest experiment with at least some given
fraction of the information from $\Delta=1$.

We have examined a simple linear model because it is easiest to work
with and is a reasonable design choice in many contexts.
Analysts have many more models at their disposal when the
data come in and they do not need to use that model.
If a more satisfactory model is found then the methods
of Section~\ref{sec:general} can be used to design the next experiment under that model.
Section~\ref{sec:quadratic} on the quadratic model
provides a warning: the RDD becomes very unreliable already with
this model which is only slightly more complicated than the two-line
model. A tie-breaker greatly reduces the variance compared to RDD.

In some applications, the assignment variable may be the output
of a scoring model based on many subject variables.
We expect that incorporating randomness into the design will
give better data for refitting such an underlying scoring model,
but following up that point is outside the scope of this article.
The effects are likely to vary considerably from problem to problem.


\bibliographystyle{plainnat}
\bibliography{rd}
\vfill\eject

\end{document}